\newtheorem{theorem}{Theorem}
\newtheorem{lemma}{Lemma}
\newtheorem{definition}{Definition}
\newtheorem{example}{Example} 
\newcommand{\dfn}{\triangleq}
\newcommand{\bs}[1]{\boldsymbol{#1}}
\newcommand{\indfunc}[1]{\mathds{1}\left(#1\right)}
\def\tind{i}
\newcommand{\blockind}{j}
\def\protLength{n}
\def\blockLength{m}
\def\smallwindow{p}
\newcommand{\FiniteStateProtocol}{\Phi}
\newcommand{\StateOrder}{M}
\def\BSCprob{\varepsilon}
\newcommand{\sigmalength}{N}
\newcommand{\transFunc}{\phi}
\newcommand{\entireprot}{\pi}
\newcommand{\protTrans}{\tau}
\newcommand{\statevar}{s}
\newcommand{\statespace}{S}
\newcommand{\finitestatefunc}{\psi}
\newcommand{\stateadvancefunc}{\eta}
\newcommand{\Cshannon}{{\mathsf{C}}_{\mathsf{Sh}}}
\newcommand{\Cinter}{{\mathsf{C}}_{\mathsf{I}}}
\newcommand{\simulatingprotocol}{\Sigma}
\newcommand{\coincidingConstant}{K}
\newcommand{\usefulset}{\mathcal{F}}
\newcommand{\Pe}{P_e}
\newcommand{\speakerorderfunc}{\mu}
\newcommand{\stateiterfunc}{\nu}
\DeclareMathOperator*{\sumparity}{{\oplus}}
\begin{document}
\title{On the Interactive Capacity of Finite-State Protocols}
\author{Assaf Ben-Yishai, Young-Han Kim, Rotem Oshman, and Ofer Shayevitz
	\thanks{A. Ben-Yishai and O. Shayevitz are with the Department of EE--Systems, Tel Aviv University, Tel Aviv, Israel. 
		Y.-H.~Kim is with the Department of Electrical and Computer Engineering, University of California, San Diego, La Jolla, CA 92093 USA. 
		R. Oshman is with the 
		Department of Computer Science, Tel Aviv University, Tel Aviv, Israel. 
		Emails: \{assafbster@gmail.com, yhk@ucsd.edu, roshman@tauex.tau.ac.il,  ofersha@eng.tau.ac.il\}
		The work of A. Ben-Yishai was supported by an ISF grant no. 1367/14. 
		The work of O. Shayevitz was supported by an ERC grant no. 639573.
	This paper was presented in part at ISIT 2019.}}

\maketitle

\begin{abstract}
The interactive capacity of a noisy channel is the highest possible rate at which arbitrary interactive protocols can be simulated reliably over the channel. Determining the interactive capacity is notoriously difficult, and the best known lower bounds are far below the associated Shannon capacity, which serves as a trivial (and also generally the best known) upper bound. This paper considers the more restricted setup of simulating finite-state protocols. It is shown that all two-state protocols, as well as rich families of arbitrary finite-state protocols, can be simulated at the Shannon capacity, establishing the interactive capacity for those families of protocols.
\end{abstract}


\section{Introduction} 
In the classical one-way communication problem, a transmitter (Alice) wishes to reliably send a message to a receiver (Bob) over a memoryless noisy channel. She does so by mapping her message into a sequence of channel inputs (a \textit{codeword}) in a predetermined way, which is corrupted by the channel and then observed by Bob, who tries to recover the original message. The {\em Shannon capacity} of the channel, 
quantifies the most efficient way of conducting reliable communication, and is defined as the maximal number of message bits per channel-use that Alice can convey to Bob with vanishingly low error probability. In the two-way channel setup \cite{shannon1961two}, both parties draw independent messages and wish to exchange them over a two-input two-output memoryless noisy channel, and the Shannon capacity (region) is defined similarly. Unlike the one-way case, both parties can now employ adaptive coding, by incorporating their respective observations of the past channel outputs into their transmission processes. However, just as in the one-way setup, the messages they wish to exchange are determined before communication begins. In other words, if Alice and Bob had been connected by a noiseless bit pipe, they could have simply sent their messages without any regard to the message of their counterpart.  

In a different two-way communication setup, generally referred to as {\em interactive communication}, the latter assumption is no longer held true. In this interactive communication setup, Alice and Bob do not necessarily wish to disclose all their local information. What they want to tell each other depends, just like in human conversation, on what the other would tell them. A simple instructive example (taken from \cite{gelles2015coding}) is the following. Suppose that Alice and Bob play \textit{correspondence chess}. Namely, they are located in two distinct places and play by announcing their moves over a communication channel (using, say, $12$ bits per move, which is clearly sufficient). If the moves are conveyed without error, then both parties can keep track of the state of the board, and the game can proceed to its termination. The sequence of moves occurring over the course of this noiseless game is called a {\em transcript}, and it is dictated by the {\em protocol} of the game, which constitutes Alice and Bob's respective strategies determining their moves at any given state of the board.  

Now, assume that Alice and Bob play a chess game over a noisy two-way channel, yet wish to simulate the transcript as if no noise were present. In other words, they would like to communicate back and forth in a way that ensures, once communication is over, that the transcript of the noiseless game can be reproduced by to both parties with a small error probability. They would also like to achieve this goal as efficiently as possible, i.e., with the least number of channel uses. One direct way to achieve this is by having both parties describe their entire protocol to their counterpart, i.e., each and every move they might take given each and every possible state of the board. This reduces the interactive problem to a non-interactive one, with the protocol becoming a pair of messages to be exchanged. However, this solution is grossly inefficient; the parties now know much more than they really need to in order to simply reconstruct the transcript. At the other extreme, Alice and Bob may choose to describe the transcript itself by encoding each move separately on-the-fly, using a short error correcting code. Unfortunately, this code must have some fixed error probability, and hence an undetected error is bound to occur at some unknown point, causing the states of the board held by the two parties to diverge, and rendering the remainder of the game useless. It is important to note that if Alice and Bob had wanted to play sufficiently many games in parallel, then they could have used a long error-correcting code to simultaneously protect the set of all moves taken at each time point, which in principle would have let them operate at the one-way Shannon capacity (which is the best possible). The crux of the matter therefore lies in the fact that the interactive problem is \textit{one-shot}, namely, only a single instance of the game is played.

In light of the above, it is perhaps surprising that it is nevertheless possible to simulate any one-shot interactive protocol using a number of channel uses that is proportional to the length of the transcript, or in other words, that there is a positive {\em interactive capacity} whenever the Shannon capacity is positive. This fact was originally proved by Schulman \cite{schulman1992communication}, who was also the first to introduce the notion of interactive communication over noisy channels. The lower bound on the interactive capacity was recently studied in \cite{InteractiveLowerBound}, and was found to be at least $0.0302$ of the Shannon capacity for all binary memoryless symmetric channels.
 
In this work, rather than giving a lower bound on the interactive capacity for \textit{any} protocol, we study the notion of interactive capacity where constraints are imposed on the family of protocols to be simulated.
We define the family of \textit{finite-state} protocols, and show that for a large class of these protocols, the Shannon capacity is achievable. In particular, we prove that Shannon capacity is achievable for all protocols having only two states (\textit{two-state protocols}). For larger state-spaces, 
we discuss rich families of protocols which satisfy two sufficient conditions, and show that  within these families almost all members can be reliably simulated at the Shannon capacity. We note that the approach of studying the interactive capacity of protocols having a specific structure was previously taken in \cite{haeupler2017bridging}. The authors of \cite{haeupler2017bridging} limited the ``interactiveness" of the protocols by considering families of protocols whose transcript is predictable to a certain extent, and proved that they can be simulated in higher rates than general protocols. The constraints imposed on the protocols in this paper are, however, on the memory of the protocols and not on their predictability.

The rest of the paper is organized as follows: in Section~\ref{section:formulation} the interactive communication problem is  formulated. In Section~\ref{section:finitestateformulation}, finite-state (or \textit{$\StateOrder$-state}) protocols, which are the main model discussed in this paper, are defined. In Section~\ref{section:codingscheme} the basic concepts of the coding schemes are presented. In Section~\ref{section:twostates} a capacity achieving coding scheme for two-state protocols is presented. 
In Section~\ref{section:threestates} it is proved that 
the concepts in Section~\ref{section:codingscheme} cannot be used for at least one three-state protocol. In Section~\ref{section:manystates} families of finite-state protocols for which almost-all members can be simulated at Shannon capacity are presented. Finally, Section~\ref{section:conclusion} concludes the paper.

A preliminary version of some of the results in this paper appeared in \cite{MarkovianISIT}. Here we extend upon the results of \cite{MarkovianISIT} as follows: first, \cite{MarkovianISIT} only considered \emph{Markovian protocols}, a special case of the type of protocols we consider here. Second, \cite{MarkovianISIT} gave a simple special case of the coding scheme of Section~\ref{section:codingscheme}; here we generalize the scheme and give two methods that can handle more complex protocols, beyond Markovian. Finally, the Shannon capacity achieving scheme for two-state protocols in Section~\ref{section:twostates}, the inachievability results for three states in Section~\ref{section:threestates} and the scheme for higher order models in Section~\ref{section:manystates}
appear here for the first time.


\section{The Interactive Communication Problem \label{section:formulation}}
In this paper, we define a \textit{length-$n$ interactive protocol} as a triplet 
$\bs{\entireprot}\dfn(\bs{\transFunc}^\mathrm{Alice},\bs{\transFunc}^\mathrm{Bob},\bs{\speakerorderfunc})$, where:
\begin{align}
\bs{\transFunc}^\mathrm{Alice} &\dfn \left\{{\transFunc}^\mathrm{Alice}_{\tind}:\{0,1\}^{i-1}\mapsto \{0,1\}\right\}_{\tind=1}^\protLength\\
\bs{\transFunc}^\mathrm{Bob} &\dfn \left\{{\transFunc}^\mathrm{Bob}_{\tind}:\{0,1\}^{i-1}\mapsto \{0,1\}\right\}_{\tind=1}^\protLength\\ 
\bs{\speakerorderfunc} &\dfn \left\{{\speakerorderfunc}_{\tind}:\{0,1\}^{i-1}\mapsto \{\mathrm{Alice},\mathrm{Bob}\}\right\}_{\tind=1}^\protLength.
\end{align}

The functions $\bs{\transFunc}^\mathrm{Alice}$ are known only to Alice, and the functions $\bs{\transFunc}^\mathrm{Bob}$ are known only to Bob. The \textit{speaker order functions} $\bs{\speakerorderfunc}$ are known to both parties. The \textit{transcript} $\bs{\protTrans}$ associated with the protocol $\bs{\entireprot}$ is sequentially generated by Alice and Bob as follows:
\begin{align}\label{eq:transfunc}
\protTrans_{\tind} &=\begin{cases}
{\transFunc}^\mathrm{Alice}_{\tind}(\bs{\protTrans}^{\tind-1}) & \sigma_\tind=\mathrm{Alice}\\
{\transFunc}^\mathrm{Bob}_{\tind}(\bs{\protTrans}^{\tind-1}) & \sigma_\tind=\mathrm{Bob},
\end{cases}
\end{align}
where $\sigma_\tind$ is the identity of the speaker at time $\tind$, which is given by:
\begin{align}\label{eq:transfunc2}
\sigma_\tind&={\speakerorderfunc}_{\tind}(\bs{\protTrans}^{\tind-1}).
\end{align}
In the \textit{interactive simulation problem}, Alice and Bob would like to \textit{simulate} the 
transcript $\bs{\protTrans}$, by communicating back and forth over a noisy memoryless channel $P_{Y|X}$. 
Specifically, we restrict our discussion to channels with a binary input alphabet $\mathcal{X} = \{0,1\}$, and a general (possibly continuous) output alphabet $\mathcal{Y}$. We use $\Cshannon(P_{Y|X})$ to denote the Shannon capacity of the channel. Note that $\Cshannon(P_{Y|X})\leq 1$, since the input of the channel is binary. Naturally, we also limit the discussion to channels whose Shannon capacity is non-zero.

Note that while the order of speakers in the interactive protocol itself might be determined on-the-fly (by the sequence of functions $\bs{\speakerorderfunc}$), we restrict the simulating protocol to use a predetermined order of speakers. The reason is, that allowing an adaptive order over a noisy channel, will lead to a non-zero probability of disagreement regarding the order of speakers. This disagreement might lead to simultaneous transmissions at both parties, which is not supported by the chosen physical channel model

To achieve their goal, Alice and Bob employ a length-$\sigmalength$ coding scheme $\simulatingprotocol$ that uses the channel $\sigmalength$ times. The coding scheme consists of a disjoint partition $\tilde{A}\sqcup\tilde{B} = \{1,...,\sigmalength\}$ where $\tilde{A}$ (resp. $\tilde{B}$) is the set of time indices where Alice (resp. Bob) speaks. This disjoint partition can be a function of $\bs{\speakerorderfunc}$, but not of $\bs{\transFunc}^\mathrm{Alice},\bs{\transFunc}^\mathrm{Bob}$. At time $j\in \tilde{A}$ (resp. $j\in \tilde{B}$), Alice (resp. Bob) sends some Boolean function of $(\bs{\transFunc}^\mathrm{Alice}, \bs{\speakerorderfunc}$) (resp. $(\bs{\transFunc}^\mathrm{Bob}, \bs{\speakerorderfunc}$)), and of everything she has received so far from her counterpart. The rate of the scheme is $R = \frac{\protLength}{\sigmalength}$ bits per channel use. When communication terminates, Alice and Bob produce their \textit{simulations} of the transcript $\protTrans$, denoted  by $\hat{\bs{\protTrans}}_A(\simulatingprotocol,\bs{\transFunc}^\mathrm{Alice}, \bs{\speakerorderfunc})\in\{0,1\}^\protLength$ and  $\hat{\bs{\protTrans}}_B(\simulatingprotocol,\bs{\transFunc}^\mathrm{Bob}, \bs{\speakerorderfunc})\in\{0,1\}^\protLength$ respectively. The error probability attained by the coding scheme is the probability that either of these simulations is incorrect, i.e., 
\begin{align}\label{eq:proterror}
\Pe(\simulatingprotocol,\bs{\entireprot})\dfn \Pr\left(
\hat{\bs{\protTrans}}_A(\simulatingprotocol,\bs{\transFunc}^\mathrm{Alice}, \bs{\speakerorderfunc})\neq \bs{\protTrans} \;\vee\; \hat{\bs{\protTrans}}_B(\simulatingprotocol,\bs{\transFunc}^\mathrm{Bob}, \bs{\speakerorderfunc})
\neq \bs{\protTrans}\right).
\end{align}
A rate $R$ is called \textit{achievable} if there exists a sequence $\simulatingprotocol_{\protLength}$ of length-$N_n$ coding schemes with rates $\frac{\protLength}{\sigmalength_{\protLength}}\geq R$, such that 
\begin{align}\label{eq:rateerror}
\lim_{\protLength\to \infty}\max_{\bs{\entireprot} \text{ of length } \protLength} P_e(\simulatingprotocol_{\protLength},\bs{\entireprot})  = 0, 
\end{align}
where the maximum is taken over all length-$\protLength$ interactive protocols. Accordingly, we define the interactive capacity  $\Cinter(P_{Y|X})$ as the maximum of all achievable rates for the channel $P_{Y|X}$. Note that this definition parallels the definition of maximal error capacity in the one-way setting, as we require the error probability attained by the sequence of coding schemes to be upper bounded by a vanishing term \textit{{uniformly for all protocols}}.

It is clear that at least $\protLength$ bits need to be exchanged in order to reliably simulate a general protocol, and hence the interactive capacity satisfies $\Cinter(P_{Y|X})\leq 1$. In the special case of a noiseless channel, i.e., where the output deterministically reveals the input bit, and assuming that the order of speakers is predetermined (namely $\bs{\speakerorderfunc}$ contains only constant functions), this upper bound can be trivially achieved; Alice and Bob can simply evaluate and send $\protTrans_{\tind}$ sequentially according to \eqref{eq:transfunc} and \eqref{eq:transfunc2}. Note, however, that if the order of speakers is general, then this is not a valid solution, since we required the order of speakers in the coding scheme to be fixed in advance. Nevertheless, any general interactive protocol can be sequentially simulated using the channel $2\protLength$ times with alternating order of speakers, where each party sends a dummy bit whenever it is not their time to speak. Conversely, a factor two blow-up in the protocol length in order to account for a non predetermined order of speakers is also necessary. To see this, consider an example of a protocol where Alice's first bit determines the identity of the speaker for the rest of time; in order to simulate this protocol using a predetermined order of speakers, it is easy to see that at least $n-1$ channel uses must be allocated to each party in advance. We conclude that under our restrictive capacity definition, the interactive capacity of a noiseless channel is exactly $\frac{1}{2}$.

When the channel is noisy, a tighter trivial upper bound holds:
\begin{align}\label{eq:shannon_bound}
\Cinter(P_{Y|X})\leq\frac{1}{2}\Cshannon(P_{Y|X}), 
\end{align}
To see this, consider the same example given above, and note that each party must have sufficient time to reliably send $n-1$ bits over the noisy channel. Hence, the problem reduces to a pair of one-way communication problems, in which the Shannon capacity is the fundamental limit. We remark that it is reasonable to expect the bound~\eqref{eq:shannon_bound} to be loose, since general interactive protocols cannot be trivially reduced to one-way communication as the parties cannot generate their part of the transcript without any interaction. However, the tightness of the bound remains a wide open question. 

In the remainder of this paper we limit the discussion to protocols in which the order of speakers is predetermined and {bit vs.~bit}. Namely, Alice speaks at odd times ($\sigma_\tind=\mathrm{Alice}$ for odd $\tind$) and Bob speaks at even times ($\sigma_\tind=\mathrm{Bob}$ for even $\tind$). We note that for such protocols, the $1/2$ penalty required for the adaptive order of speakers in not needed and the upper bound is therefore 
\begin{align}
\Cinter(P_{Y|X})\leq\Cshannon(P_{Y|X}). 
\end{align}

\subsection{Background \label{subsection:previous}}
The interactive communication problem introduced by Schulman \cite{schulman1992communication, schulman1996coding} is motivated by Yao's communication complexity paradigm \cite{yao1979some}. In this paradigm, the input of a function $f$ is distributed between Alice and Bob, who wish to compute $f$ with negligible error (nominally set to $1/3$ and can be reduced to any other fixed value without changing the order of magnitude) by exchanging (noiseless) bits using some interactive protocol. The length of the shortest protocol achieving this is called the \textit{communication complexity} of $f$, and denoted by $CC(f)$. In the interactive communication setup, Alice and Bob must achieve their goal by communicating through a pair of independent BSC($\BSCprob$). The minimal length of an interactive {protocol} attaining this goal is now denoted by $CC_{\BSCprob}(f)$. 

In \cite{kol2013interactive}, Kol and Raz defined the interactive capacity as 
\begin{align}\label{eq:kolrazCi}
\Cinter^{\mathsf{KR}}(\BSCprob) \triangleq \lim_{\protLength\to\infty}\min_{f:CC(f)=n}\frac{\protLength}{CC_{\BSCprob}(f)},
\end{align}
and proved that 
\begin{align}\label{eq:kolrazrate}
\Cinter^{\mathsf{KR}}(\BSCprob) \geq  1-O(\sqrt{h(\BSCprob)})
\end{align}
 in the limit of $\BSCprob\to 0$, under the additional assumption that the communication complexity of $f$ is computed with the restriction that the order of speakers is predetermined and has some fixed period. The former assumption on the order of speakers is important. Indeed, consider again the example where the function $f$ is either Alice's input or Bob's input as decided by Alice. In this case, the communication complexity with a predetermined order of speakers is double that without this restriction, and hence considering such protocols renders $\Cinter^{\mathsf{KR}}(\BSCprob) \leq \frac{1}{2}$. For further discussion on the impact of speaking order as well as channel models that allow collisions, see \cite{haeupler2014interactive}. For a fixed nonzero $\BSCprob$, the coding scheme presented in \cite{schulman1992communication} (which precedes \cite{kol2013interactive}) already showed that $\Cinter^{\mathsf{KR}}(\BSCprob)  = \Theta(\Cshannon(\BSCprob))$, but the constant has not been computed. In \cite{InteractiveLowerBound} it is shown that 	${\Cinter(P_{Y|X})}\geq  0.0302\cdot {\Cshannon(P_{Y|X})}$ for any channel $P_{Y|X}$ taken from the class of binary memoryless symmetric channels (which include the binary symmetric channel, the binary erasure channel, the binary input additive Gaussian channel etc.).
 

\section{Finite-state Protocols\label{section:finitestateformulation}}
Let us start by defining the notions of interactive rate and capacity for families of protocols. Let $\bs{\Pi}=\{\Pi_1,\Pi_2,...\}$ be a sequence of families of protocols, 
where $\Pi_n$ denotes some family of length-$n$ protocols. A rate $R$ is called \textit{achievable} for $\bs{\Pi}$ if there exists a sequence $\simulatingprotocol_{\protLength}$ of $(\protLength,\sigmalength_{\protLength})$ coding schemes where $\sigmalength_{\protLength} \leq \frac{\protLength}{R}$, 
and such that 
\begin{align}\label{eq:rateerrorPin}
\lim_{n\to \infty}\max_{\pi\in \Pi_n} P_e(\Sigma_n,\bs{\pi})  = 0.
\end{align}
Namely, the difference from 
 \eqref{eq:rateerror} is that now the maximum is taken over the protocols in $\Pi_n$ and not over the entire family of protocols with length $\protLength$.
Accordingly, we denote the interactive capacity respective to the channel $P_{X|Y}$ and the family of protocols $\bs{\Pi}$ by $\Cinter(\bs{\Pi},P_{Y|X})$, and define it as the maximum of all achievable rates for $P_{Y|X}$ and $\bs{\Pi}$.
 
The family of protocols studied in this paper is the family of finite-state protocols with $\StateOrder$ states, which will be referred to in short as \textit{$\StateOrder$-states} protocols. In these protocols, the entire history of the transcript is encapsulated in a state-variable taken from a set with a finite cardinality. The state variable determines the following transcript bit, and is advanced by both parties using a predetermined update rule.

The notation of finite-state protocols is given here:
\begin{definition}\label{def:finitestate}
Let $\bs{\FiniteStateProtocol}_{\StateOrder}=\{\FiniteStateProtocol_{\StateOrder,1},\FiniteStateProtocol_{\StateOrder,2},...\}$ denote the family of $\StateOrder$-state protocols of increasing lengths. For these protocols Alice speaks at odd times and Bob speaks on even times: namely $\sigma_\tind=\mathrm{Alice}$ if $\tind$ is odd, and $\sigma_\tind=\mathrm{Bob}$ if $\tind$ is even. The transcript of these protocols is generated by
\begin{align}\label{eq:finitenextbit}
\protTrans_{\tind} = {\finitestatefunc}_{\tind}(\statevar_{\tind-1}),
\end{align}
where $\statevar_{\tind}$ is the state variable at time $\tind$, $\statevar_{\tind}\in \statespace$ . $\statespace$ is the state-space, with cardinality $|\statespace|=\StateOrder$ assumed to be $\statespace=\{0,1,...,\StateOrder-1\}$ without loss of generality. 
$\transFunc_{\tind}:\statespace\mapsto\{0,1\}$ is the transmission function at time $\tind$, owned by Alice at odd $\tind$ and by Bob at even $\tind$ and assumed to be unknown to the counterpart.
In addition, the state $\statevar_{\tind}$ is advanced in time according to 
\begin{align}\label{eq:finitenextstate}
\statevar_{\tind} = \stateadvancefunc(\statevar_{\tind-1},\protTrans_{\tind}),
\end{align}
where $\stateadvancefunc:(\statespace,\{0,1\})\mapsto\statespace$ is the state-advance function, which is time invariant and known to both parties.
\end{definition}
The following example for a finite-state protocols is the family of Markovian protocols previously presented in \cite{MarkovianISIT} and defined as follows:
\begin{example}\label{example:markovian}
For a Markovian protocol, the number of states $\StateOrder$ is a power of two, and the state variable corresponds to the last $\log\StateOrder$ bits of the transcript. Namely, the state can be regarded as the binary vector
\begin{align}
\bs{\statevar}_{\tind-1} = (\statevar_{\tind-1}(1),...,\statevar_{\tind-1}(\log {\StateOrder})) = ({\protTrans}_{\tind-\log {\StateOrder}},..., {\protTrans}_{\tind-1}).
\end{align}
and the state-advance function is
\begin{align}
\bs{\statevar}_{\tind}=\stateadvancefunc(\bs{\statevar}_{\tind-1},\protTrans_{\tind})=
(\statevar_{\tind-1}(2),...,\statevar_{\tind-1}(\log {\StateOrder}-1),\protTrans_{\tind}).
\end{align}
\end{example}

\section{Basic Concepts of the Coding Schemes \label{section:codingscheme}}
The proofs in this paper are based on constructive coding schemes which use the concept of vertical simulation presented below, implemented in conjunction with either one of the two methods described in Subsections~\ref{subsec:effstate} and \ref{subsec:effexhaust}.

\subsection{Vertical Simulation}
\begin{table*}
	\centering
\begin{tabular}{|c|l||l|l|l|l|l|}
\hline block \#& initial state &
\multicolumn{5}{ |c| }{transcript} \\
\hline\hline
$1$&$\statevar_0$ & $\protTrans_{1}$ & $\protTrans_{2}$&$\ldots$&$\protTrans_{\blockLength-1}$ &$\protTrans_{\blockLength}$\\\hline
$2$&$\statevar_{\blockLength}$ & $\protTrans_{\blockLength+1}$ & $\protTrans_{\blockLength+2}$&$\ldots$&$\protTrans_{2\blockLength-1}$ & $\protTrans_{2\blockLength}$\\\hline
$3$&$\statevar_{2\blockLength}$ & $\protTrans_{2\blockLength+1}$ & $\protTrans_{2\blockLength+2}$&$\ldots$&$\protTrans_{3\blockLength-1}$ & $\protTrans_{3\blockLength}$\\\hline
$\vdots$&$\vdots$&$\vdots$ & & & & $\vdots$\\\hline
$\protLength/\blockLength$  &
$\statevar_{\protLength-\blockLength}$ & $\protTrans_{\protLength-\blockLength+1}$ & $\protTrans_{\protLength-\blockLength+2}$&$\ldots$&$\protTrans_{\protLength-1}$ & $\protTrans_{\protLength}$\\ \hline
\multicolumn{2}{|c|}{speaker} & Alice & Bob &$\ldots$&Alice&Bob\\
\hline\hline
\multicolumn{2}{|c||}{vertical block \#}&$1$&$2$&$\cdots$&$\blockLength-1$&$\blockLength$\\ \hline
\end{tabular}
\vspace{2mm}
\caption{Vertical protocol simulation \label{table:blocktrans}}
\end{table*}

As explained before, the transcript bits of interactive protocols are produced sequentially ($\protTrans_1, \protTrans_2, \protTrans_3, ....$). Simulating a protocol over noisy channel requires the reliable transmission of the bits sent in every round, whose number is potentially small (and can even be equal to one, in the extreme case and in the finite-state protocols discussed in this paper), which impedes the use of efficient channel codes due to finite block-length bounds \cite{polyanskiy2010channel}.

One way of circumventing the problem of a short block-length (i.e. small number of bits per round) is using vertical simulation as explained in this subsection. The concept of a vertical simulation is depicted in Table~\ref{table:blocktrans}, in which the protocol is simulated in \textit{vertical blocks}, according to the indexing at the bottom row of the table. Namely, the first vertical block contains the transcript bits $(\protTrans_1,\protTrans_{\blockLength+1}, \protTrans_{2\blockLength+1},...)$, the second vertical block contains the 
transcript bits $(\protTrans_2,\protTrans_{\blockLength+2}, \protTrans_{2\blockLength+2},...)$ and so on. As shall be explained in the sequel, the vertical blocks can be constructed to be sufficiently long in order to allow reliable transmission at rates approaching Shannon capacity. The main obstacle of using this technique in the general case is the assumption that future transcript bits 
(for example $\protTrans_{\blockLength+1}$, $\protTrans_{2\blockLength+1}$ etc.~for the first vertical block)
are known prior to the simulation of the protocol.  In the sequel, we shall provide methods for the efficient computation of these future transcript bit, which facilitate the simulation of the entire protocol at Shannon capacity.

Let us now explicitly define the concept of vertical simulation. Let the $\protLength$ times of the protocols be divided into $\protLength/\blockLength$ blocks of length $\blockLength$, and assume that all the \textit{initial-states} respective to the beginnings of all blocks ($\statevar_0, \statevar_{\blockLength}, \statevar_{2\blockLength}$ etc.), are known to both parties before the transcript is simulated. For simplicity of presentation, one can consider at this point that the initial states are calculated and revealed by a genie, who knows the transmission functions of both parties. More realistic methods for calculating the initial states are elaborated later in this section. We now note, that by the finite-state property in Definition~\ref{def:finitestate}, having the initial-states of all the blocks known, the parties can continue simulating the transcript of every block, without needing to know the transcripts of its preceding blocks. In other words, the knowledge of the initial state at every blocks decouples the simulation problems of distinct blocks.

Using this decoupling assumption, the following coding scheme can be used for the simulation of the protocol over $P_{Y|X}$. We start by defining the vectors of state estimates and transcript estimates held by Alice and Bob. We use a distinct notation for every party and emphasize the fact that these are estimates, since they are computed over noisy channels. We denote the vector of initial state estimates at Alice's side for vertical block $j$ by
\begin{align}\label{eq:statevecdef}
\hat{\bs{\statevar}}^{A}(j)\dfn(\hat{\statevar}^A_{j-1}, \hat{\statevar}^A_{j+\blockLength-1},\hat{\statevar}^A_{j+2\blockLength-1},...,\hat{\statevar}^A_{j+\protLength-\blockLength-1}),    
\end{align}
and the respective vector of transcript estimates by
\begin{align}\label{eq:transvecdef}
\hat{\bs{\protTrans}}^A(j)\dfn(\hat{\protTrans}^A_j,\hat{\protTrans}^A_{j+\blockLength},
\hat{\protTrans}^A_{j+2\blockLength}
...,\hat{\protTrans}^A_{j+\protLength-\blockLength}).
\end{align}
Bob's counterparts to $\hat{\bs{\statevar}}^{A}(j)$ and $\hat{\bs{\protTrans}}^A(j)$ are respectively denoted by $\hat{\bs{\statevar}}^{B}(j)$ and $\hat{\bs{\protTrans}}^B(j)$ and are similarly defined. The scheme can now be presented for odd $j$ from $1$ to $\blockLength$:
\begin{enumerate}
    \item Assume that Alice and Bob have $\hat{\bs{\statevar}}^{A}(j)$ and $\hat{\bs{\statevar}}^{B}(j)$.
    \item Alice uses $\hat{\bs{\statevar}}^{A}(j)$ to calculate  $\hat{\bs{\protTrans}}^A(j)$ according to \eqref{eq:finitenextbit}.
    \item Alice encodes $\hat{\bs{\protTrans}}^A(j)$ using a block code with rate $R_v<\Cshannon(P_{Y|X})$, and sends it to Bob over the channel, using  $\frac{\protLength/\blockLength}{R}$ times. This code will be referred to as a \textit{vertical} block code.
    \item Bob decodes the output of the channel and obtains $\hat{\bs{\protTrans}}^B(j)$.
    \item Alice (resp. Bob) uses $\hat{\bs{\statevar}}^{A}(j)$ and $\hat{\bs{\protTrans}}^A(j)$ (resp. $\hat{\bs{\statevar}}^{B}(j)$ and $\hat{\bs{\protTrans}}^B(j)$) to calculate $\hat{\bs{\statevar}}^{A}(j+1)$ (resp. $\hat{\bs{\statevar}}^{B}(j+1)$) according to \eqref{eq:finitenextstate}.
    \item Alice and Bob advance $j$ by one.
\end{enumerate}
For even $j$, the same steps are implemented by exchanging the roles of Alice and Bob.
We recall that we previously assumed that for the first block, both parties know the actual initial states of the noiseless protocol, i.e. $\hat{\bs{\statevar}}^{A}(1)=\hat{\bs{\statevar}}^{B}(1)$ and both are equal to the state vector of the noiseless protocol. It is clear from the construction of the scheme, that if all block codes are reliably decoded, the transcript is simulated without error. The following basic lemma gives a condition for the reliable decoding of block codes:
 \begin{lemma}\label{lemma:blockwise} Suppose $l(n)$ independent blocks of $b(n)$ bits are to be conveyed over channel $P_{Y|X}$ at rate $R<\Cshannon(P_{Y|X})$ and $n\to \infty$. Then, if $l(n) = o(e^{b(n)})$, the probability of error in the decoding of one or more blocks is $o(1)$. 
 \end{lemma}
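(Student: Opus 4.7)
The plan is to reduce the problem to a single block via a union bound, with the per-block error controlled by the classical noisy channel coding theorem in its exponential-error-probability form. First I would invoke Gallager's random-coding bound (or any derivation of a positive error exponent below capacity): for a suitable rate $R < \Cshannon(P_{Y|X})$ there is a constant $E(R)>0$ such that any one of the $l(n)$ blocks, encoded with a length-$b(n)/R$ block code, can be decoded with probability of error at most $e^{-b(n) E(R)/R}$ for all sufficiently large $b(n)$. Moreover, the per-bit exponent $E(R)/R$ can be made as large as desired---in particular at least $1$---by taking the rate small enough (but still below $\Cshannon(P_{Y|X})$), since $E(R) \to E(0) > 0$ as $R \to 0^+$ while $R \to 0$.

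Second, because the $l(n)$ blocks are encoded and decoded independently over disjoint sets of channel uses, a union bound gives
\begin{align}
\Pr(\text{at least one block decoded incorrectly})
\;\leq\; l(n)\cdot e^{-b(n)E(R)/R}
\;\leq\; l(n)\cdot e^{-b(n)}.
\end{align}
Plugging in the hypothesis $l(n) = o(e^{b(n)})$ then gives $l(n)\cdot e^{-b(n)} = o(1)$, which is exactly the stated conclusion.

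There is no substantive obstacle in this argument; it is a textbook combination of a per-block error-exponent bound and the union bound. The only mildly delicate point is guaranteeing that the per-information-bit exponent $E(R)/R$ is at least $1$, so that the exponential decay of per-block error matches the $e^{b(n)}$ appearing in the hypothesis on $l(n)$. This is handled once and for all by choosing the vertical block-code rate sufficiently low below $\Cshannon(P_{Y|X})$; all subsequent applications of the lemma then simply verify the growth condition on $l(n)$.
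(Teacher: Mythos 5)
Your proof follows exactly the same route as the paper's: Gallager's random-coding exponent for each block, followed by a union bound over the $l(n)$ blocks. You have also correctly put your finger on the one genuinely delicate point, which the paper's own proof glosses over in its step $(a)$: the union bound gives $l(n)\exp\left(-b(n)E_r(R)/R\right)$, and the hypothesis $l(n)=o(e^{b(n)})$ only kills this quantity if the per-bit exponent $E_r(R)/R$ is at least $1$. For $R$ close to $\Cshannon(P_{Y|X})$ the exponent $E_r(R)$ tends to zero, so $E_r(R)/R$ is small and $l(n)e^{-b(n)E_r(R)/R}$ can diverge even though $l(n)=o(e^{b(n)})$ (take, e.g., $l(n)=e^{b(n)/2}$). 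This is not merely a weakness of the union bound: by the sphere-packing converse each block errs with probability at least $e^{-b(n)(E_{sp}(R)+o(1))/R}$, so with that many independent blocks some block actually fails with probability tending to one. As literally stated, the lemma is therefore problematic for rates near capacity, and both proofs share this issue.

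Where your proposal goes wrong is the repair. Choosing $R$ small enough that $E_r(R)/R\geq 1$ is not available here: the lemma is quantified over every $R<\Cshannon(P_{Y|X})$ and is invoked with the vertical-code rate $R_v$ arbitrarily close to $\Cshannon(P_{Y|X})$ --- that is the entire point of the capacity-achieving schemes in Sections~\ref{section:codingscheme}--\ref{section:manystates}. Lowering the rate to secure the exponent forfeits the conclusion $\Cinter=\Cshannon$. The repair consistent with how the lemma is used is to strengthen the hypothesis to $\ln l(n)=o(b(n))$, i.e., $l(n)=e^{o(b(n))}$; then $l(n)\exp\left(-b(n)E_r(R)/R\right)=\exp\left(-b(n)\left(E_r(R)/R-o(1)\right)\right)\to 0$ for every fixed $R<\Cshannon(P_{Y|X})$, since $E_r(R)>0$. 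This stronger hypothesis holds in every application in the paper, where $l(n)$ is polynomial in $n$ while $b(n)$ is a power of $n$. In short: same approach as the paper, a correctly identified gap that the paper itself has, but a patch that would undermine the theorems the lemma is meant to support.
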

 The proof is due to the basic fact that the probability of error decays exponentially in the block length and appears in 
 Appendix~\ref{appendix:lemma}. 
 
 From this point on, we set $\blockLength=\sqrt{\protLength}$. We assume that if needed, the transcript is extended by zeros in order to ensure that $\sqrt{\protLength}$ is an integer. Using  Lemma~\ref{lemma:blockwise} 
 with $l(n)=m(n)=\sqrt{m}$ 
 ensures that reliable transmission of the vertical blocks can be accomplished at any rate $R_v<\Cshannon(P_{Y|X})$.
 
 Let us now bound the total length $\sigmalength$ of the simulating protocol: 
 \begin{align}
     \sigmalength = \frac{\protLength}{\blockLength}\frac{\blockLength}{R_v}=\frac{\protLength}{R_v}.
 \end{align}
 Therefore $\frac{\protLength}{\sigmalength}= R_v$ for every $R_v<\Cshannon(P_{Y|X})$, which means that the protocol can be reliably simulated at Shannon capacity if $\protLength\to\infty$.
 
 So far, we assumed without justification, that initial states of all the blocks were revealed to both parties before the beginning of the simulation. We now present two alternative methods for their efficient calculation.
 

 \subsection{Efficient State  Lookahead\label{subsec:effstate}}
 This method is based on two assumptions:
 \begin{enumerate}
     \item For every block, the last state can be calculated by both parties given the first state,  without knowing the entire transcript of the block, 
     using only $o(\blockLength)$ (clean) bits exchanged between the parties.
     \item The $\frac{\protLength}{\blockLength}o(\blockLength)$ bits required for this calculation for the entire protocol, can be reliably exchanged over the noisy channels at a strictly positive rate.
 \end{enumerate}
Assuming that the very first state of the protocol is known to both parties, and that the first condition holds, Alice and Bob can go from the first block to the last and calculate all their respective initial states.
The second condition guarantees that only additional $\Theta(\frac{\protLength}{\blockLength}o(\blockLength))=o(\protLength)$ channel uses are required for this process. The total length of the simulating protocol can thus be bounded by 
 \begin{align}
     \sigmalength \leq \frac{\protLength}{\blockLength}\frac{\blockLength}{R_v}+o(\protLength),
 \end{align}
 so, as before, $\lim_{\protLength\to\infty}\frac{\protLength}{\sigmalength}= R_v$ for every $R_v<\Cshannon(P_{Y|X})$, which means that the protocol can be reliably simulated at Shannon capacity provided that $\protLength\to\infty$.
 


\subsection{Efficient Exhaustive Simulation \label{subsec:effexhaust}}
The following method was previously presented in \cite{MarkovianISIT} for the simulation of Markovian protocols. 
So far, we assumed that for every block, only the transcript related to single initial state, which was assumed to be the actual state in the noiseless protocol, was simulated. Alternatively, it is possible to simulate all transcripts resulting from all possible initial states in every block, and then go from the first block to the last and estimate the transcript of the noiseless protocol according the the final state of the previous block. 
Such a simulation can be made possible, for example, if the parties simply describe the identities of their transmission functions to their counterparts. While it is easy to show that the required bits can be conveyed at Shannon capacity, if there are more than two possible transmission functions at every time, the total rate of such a coding scheme is bound to be lower than Shannon capacity.

However, Shannon capacity can be achieved if the following conditions hold:
\begin{enumerate}
    \item At every block, the transcripts associated with all possible $\StateOrder$ initial states, can be encoded using only $\blockLength+o(\blockLength)$ bits.
    \item The required bits can be reliably conveyed over the noisy channels at any rate below Shannon capacity.
\end{enumerate}
If both conditions hold then the total number of channel usese required for the simulation is
 \begin{align}
     \sigmalength \leq \frac{\protLength}{\blockLength}\frac{\blockLength+o(\blockLength)}{R_v},
 \end{align}
 and the protocol can be simulated at any rate below Shannon capacity as long as $\protLength\to \infty$.


\section{Achieving Shannon Capacity with Two States \label{section:twostates}}
The first result presented in this paper is that any two-state protocol can be simulated at Shannon capacity. An equivalent statement is given in the following theorem:
\begin{theorem} 
\begin{align}
    \Cinter(\bs{\Pi},P_{Y|X})=\Cshannon(P_{Y|X}),
\end{align}
where $\bs{\Pi}=\bs{\FiniteStateProtocol}_2$, namely, the family of two-state protocols.
\end{theorem}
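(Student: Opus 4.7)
The plan is to establish the lower bound $\Cinter(\bs{\FiniteStateProtocol}_2, P_{Y|X}) \geq \Cshannon(P_{Y|X})$, since the matching upper bound $\Cinter \leq \Cshannon$ has already been derived. To this end, I would instantiate the vertical simulation scheme of Section~\ref{section:codingscheme} together with the efficient state lookahead method of Subsection~\ref{subsec:effstate}. This reduces the task to showing that, for two-state protocols, the final state of each length-$\blockLength$ block can be computed from its initial state using only $o(\blockLength)$ noiseless bits exchanged between Alice and Bob.

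The key structural observation I would exploit is that for $\StateOrder=2$ the per-step transition map $\phi_i(s) \dfn \stateadvancefunc(s, \transFunc_i(s))$ is an endofunction of $\{0,1\}$, and hence lies in the $4$-element monoid $\mathcal{M} = \{c_0, c_1, \mathrm{id}, \mathrm{neg}\}$ under composition. The block transition $\Phi \dfn \phi_\blockLength \circ \cdots \circ \phi_1$ is therefore also in $\mathcal{M}$ and admits a simple normal form: if some $\phi_i$ is a constant, then $\Phi$ is a constant with value $c_{i^*} \oplus p$, where $i^* = \max\{i : \phi_i \text{ is constant}\}$ and $p$ is the parity of $\mathrm{neg}$'s among $\phi_{i^*+1}, \ldots, \phi_\blockLength$; otherwise every $\phi_i$ lies in $\{\mathrm{id}, \mathrm{neg}\}$ and $\Phi$ is $\mathrm{id}$ or $\mathrm{neg}$ according to the total parity of $\mathrm{neg}$'s.

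Given this normal form, a per-block lookahead using $O(\log \blockLength)$ noiseless bits suffices: Alice sends the largest odd index $i^*_A$ at which her $\phi_i$ is constant together with that constant value, Bob sends the symmetric information for his even-index $\phi_i$'s, both parties then compute $i^* = \max(i^*_A, i^*_B)$ and the corresponding $c_{i^*}$, and finally each party contributes a single parity bit counting its own negations on the range $i > i^*$. Summed over the $\protLength/\blockLength = \sqrt{\protLength}$ blocks, the total lookahead overhead is $O(\sqrt{\protLength} \log \protLength) = o(\protLength)$ bits, which can be reliably conveyed over $P_{Y|X}$ with a single long error-correcting code in $o(\protLength)$ channel uses with vanishing error probability.

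The main obstacle I anticipate is verifying the structural claim about the monoid composition cleanly and confirming that it implies both conditions of Subsection~\ref{subsec:effstate}; once those are established, the vertical simulation scheme of Section~\ref{section:codingscheme} directly yields a simulating protocol of length $\sigmalength \leq \protLength/R_v + o(\protLength)$ for any $R_v < \Cshannon(P_{Y|X})$. Taking $R_v \uparrow \Cshannon(P_{Y|X})$ and $\protLength \to \infty$ then gives the desired achievability at the Shannon capacity.
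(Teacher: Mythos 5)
Your proposal is correct and follows essentially the same route as the paper's own proof: the paper likewise defines the per-step composite functions $\stateiterfunc_\tind(\statevar_{\tind-1})=\stateadvancefunc(\statevar_{\tind-1},\finitestatefunc_\tind(\statevar_{\tind-1}))$, observes they are either constant or of the form $\statevar_{\tind-1}\oplus c_\tind$, has each party announce the location and value of its last constant composite function plus one parity bit for the negations beyond it ($O(\log\blockLength)$ bits per block), and conveys the resulting $O(\sqrt{\protLength}\log\protLength)=o(\protLength)$ bits reliably via a block code to feed the efficient state lookahead of Subsection~\ref{subsec:effstate}. Your monoid normal-form phrasing is merely a cleaner packaging of the identical argument, including the degenerate case with no constant function, which the paper handles by setting $b_0=\statevar_0$.
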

The proof is based on the following coding scheme:
\begin{proof}
We assume without loss of generality that $\statespace=\{0,1\}$ and start by presenting an algorithm for the {efficient state lookahead} method from Subsection~\ref{subsec:effstate}. For simplicity of exposition we use the time indices of the first block. For other blocks the indices should be appropriately shifted. 
We also assume that the bits required for the algorithm are exchanged between Alice and Bob without error. In the sequel we explain how they can be reliably conveyed over the noisy channels. 

The first step in the algorithm is the calculation of the following sequence of \textit{composite-functions},
$\stateiterfunc_\tind:\statespace\mapsto\statespace$, defined as:
    \begin{align}\label{eq:nufunc}
        \stateiterfunc_\tind(\statevar_{\tind-1})\dfn \stateadvancefunc(\statevar_{\tind-1},\finitestatefunc_{\tind}(\statevar_{\tind-1})),
    \end{align}
for $1\leq i\leq\blockLength$, which is done by Alice at odd $\tind$ and Bob at even $\tind$.  We note that knowing $\stateiterfunc_\tind(\statevar_{\tind-1})$, and the value of $\statevar_{\tind-1}$, the following state $\statevar_{\tind}$ can be calculated. We also note that since $\stateiterfunc_\tind:\{0,1\}\mapsto\{0,1\}$,  $\stateiterfunc_\tind(\statevar_{\tind-1})$ must be one of the following four functions:
    \begin{align}
    \stateiterfunc_\tind(\statevar_{\tind-1})=\statevar_{\tind-1}\oplus 0 , \quad
    \stateiterfunc_\tind(\statevar_{\tind-1})=\statevar_{\tind-1}\oplus 1, \quad \stateiterfunc_\tind(\statevar_{\tind-1})=0, \quad
    \stateiterfunc_\tind(\statevar_{\tind-1})=1,
    \end{align}    
    which can also be described in the following form:
    \begin{align}\label{eq:transfuncs_b_c}
    \stateiterfunc_\tind(\statevar_{\tind-1})=\statevar_{\tind-1}\oplus c_\tind  \text{ or } \stateiterfunc_\tind(\statevar_{\tind-1})=b_\tind,
    \end{align}
    where $b_\tind,c_\tind\in\{0,1\}$. 
    The basic idea of the algorithm is the following. If for all $1\leq\tind\leq \blockLength$ the composite-functions are $\stateiterfunc_\tind(\statevar_{\tind-1})=\statevar_{\tind-1}\oplus c_\tind$, then the final state $\statevar_{\blockLength}$ can be calculated by:
    \begin{align}
    \statevar_{\blockLength}&=
    \statevar_0\oplus\left[
    \sumparity_{\tind={1}}^{\blockLength} c_\tind\right]\\
    &= \statevar_0\oplus d_{\text{Alice}}\oplus d_{\text{Bob}}
    \label{eq:stateiter1}
    \end{align}
where
\begin{align}
    d_{\text{Alice}}&\dfn \sumparity_{\tind \text{ is odd},\tind\in \{{1},...,\blockLength\}} c_\tind\\
    d_{\text{Bob}}&\dfn \sumparity_{\tind \text{ is even},\tind\in \{{1},...,\blockLength\}} c_\tind.
    \end{align}    
    In other words, $\statevar_\blockLength$ can be calculated by its initial value $\statevar_0$ and the parity of the number of times in the block it is flipped (from $0$ to $1$ or vice versa) by either Alice or Bob. All in all, assuming that the parties know $\statevar_0$, they only need to exchange $d_{\text{Alice}}$ and $d_{\text{Bob}}$ (i.e. two bits) in order to calculate $\statevar_\blockLength$. 
    However, so far we assumed that all the compsite functions in the block in the following form $\stateiterfunc_\tind(\statevar_{\tind-1})=\statevar_{\tind-1}\oplus c_\tind$. In the general case in which $\stateiterfunc_\tind(\cdot)$ are taken from the complete set of four functions in \eqref{eq:transfuncs_b_c}, the algorithm can be modified by first exchanging the location and the value of the last \textit{constant} composite-function in the block, i.e. the last composite-function of the form $\stateiterfunc_\tind(\statevar_{\tind-1})=b_\tind$. We note that this process requires only exchanging $O(\log\blockLength)$ between Alice and Bob. Then, $\statevar_{\blockLength}$ can be calculated similarly to  \eqref{eq:stateiter1} but from the location of the last constant composite-function and not from the beginning of the block.
    
    The algorithm is formulated as follows:
\begin{enumerate}
    \item\label{step2} Alice sends Bob her latest (odd) time index in the block for which $\stateiterfunc_\tind(\statevar_{\tind-1})=b_{\tind}$, $b_{\tind}\in\{0,1\}$ (i.e. her latest constant composite-function), along with value of $b_{\tind}$. If such an index does not exist she sends zero to Bob. Bob then repeats the same process with the appropriate alterations. 
    We use $\tind_{\text{const}}$ to denote the maximum of the indices, which therefore represents the location of the last constant composite-function in the block. We now set $b_{0}=\statevar_{0}$ if $\tind_{\text{const}}=0$ and 
    $b_{\tind_{\text{const}}}$ if $\tind_{\text{const}}>0$. 
    This process requires exchanging $O(\log\blockLength)$ bits between Alice and Bob.
    \item\label{step3} We now note, that since $\tind_{\text{const}}$ is the index of the latest constant composite-function in the block, then for all $\tind_{\text{const}}<\tind\leq\blockLength$,  $\stateiterfunc_\tind(\statevar_{\tind-1})=\statevar_{\tind-1}\oplus c_{\tind}$ for some $c_{\tind}\in\{0,1\}$. The final state in the block, $\statevar_{\blockLength}$, can therefore be calculated by
    \begin{align}
    \statevar_{\blockLength}&=b_{\tind_{\text{const}}}\oplus
    \sumparity_{\tind=\tind_{\text{const}+1}}^{\blockLength} c_\tind\\
    &=b_{\tind_{\text{const}}}\oplus d_{\text{Alice}}\oplus d_{\text{Bob}}
    \label{eq:statelookahead}
    \end{align}
where
\begin{align}
    d_{\text{Alice}}&\dfn \sumparity_{\tind \text{ is odd},\tind\in \{\tind_{\text{const}+1},...,\blockLength\}} c_\tind\\
    d_{\text{Bob}}&\dfn \sumparity_{\tind \text{ is even},\tind\in \{\tind_{\text{const}+1},...,\blockLength\}} c_\tind.\\    
\end{align}
We finally note, that $d_{\text{Alice}}$ and $d_{\text{Bob}}$ are single bits that can be calculated by their respective parties and then exchanged, leaving the total number of required exchanged bits for the algorithm $O(\log\blockLength)$.
\end{enumerate}
After repeating this operation for all blocks, it is possible 
to calculate all the final states of all blocks (i.e.~all the initial states of their following blocks) 
by applying \eqref{eq:statelookahead} from the first block to the last.

It only remains to verify that the respective $O(\log\blockLength)$  bits per block can be reliably conveyed over the noisy channels between Alice and Bob using the channel times $o(\protLength)$ times as required in Subsection~\ref{subsec:effstate}. This task can be easily performed, for example by using one block code per party containing $O(\frac{\protLength}{\blockLength}\log\blockLength)=O(\sqrt{\protLength}\log\protLength)$ bits.
\end{proof}
For the sake of completeness we now give the high level of an alternative coding scheme based on the efficient exhaustive simulation method described in Subsection~\ref{subsec:effexhaust}. This coding scheme is a little more involved than the previously described one, and depends on the identity of the state-advance function $\stateadvancefunc(\cdot)$.
We start by noting that $\stateadvancefunc(\cdot)$ is a binary function with two binary inputs, so there are in total sixteen possible such function. In particular, there are four state-advance function that do not depend on transcript bit $\protTrans_{\tind}$:
\begin{align}
    \stateadvancefunc(\statevar_{\tind-1},\protTrans_{\tind})=0\oplus \statevar_{\tind-1},\quad 
    \stateadvancefunc(\statevar_{\tind-1},\protTrans_{\tind})=1\oplus \statevar_{\tind-1},\quad
    \stateadvancefunc(\statevar_{\tind-1},\protTrans_{\tind})=0,\quad 
    \stateadvancefunc(\statevar_{\tind-1},\protTrans_{\tind})=1.
\end{align}
As the very first state of the protocol is assumed to be known to both parties, having one of these state-advance functions, the state sequence of the entire protocol can be determined before its simulation, rendering the entire protocol non-interactive, hence trivial to simulate. For the remaining twelve state-advance functions, the following coding scheme is proposed, which is described for simplicity for the first block, but should be independently implemented for all blocks:
\begin{enumerate}
    \item Before the simulation begins, both parties communicate the locations of the first (rather than the last) constant composite-function in the block: the smallest value  $1\leq\tind\leq\blockind$ for which $\stateiterfunc_\tind(\statevar_{\tind-1})=b_{\tind}$, for some $b_{\tind}\in\{0,1\}$. This process requires exchanging $O(\log\blockLength)$ bits.
    \item The parties exchange the identities of their transmission functions (i.e. $\finitestatefunc\tind(\cdot)$) before the location of the first constant composite-function in the block, using a single bit per time index. In the sequel we show that there are indeed only two relevant functions to describe, so their description requires only a single bit.  At the end of this process, the parties can independently simulate the transcripts for both initial states until the location of the first constant composite-function.
    \item For time indices after the location of the first constant composite-function, the transcripts associated with both initial states coincide, so they can both be simulated using a single bit per time index.
\end{enumerate}
Using this coding scheme, only $\blockLength+o(\blockLength)$ bits are required for the simulation of the transcripts associated with both initial states. These bits can be reliably conveyed for all blocks using vertical block codes, as required in the description of the scheme in Subsection~\ref{subsec:effexhaust}.

To see that, observe that there are only three canonical types of state-advance functions, depicted in the state-diagrams in Figure~\ref{fig:stateadvance}. The nodes represent the state variables, and the directed edges show the possible state transitions. The specific values of the states and transcript bits on the edges are deliberately not indicated; it is easy to check that there are four possible setting for every type, summing up to twelve functions in total. An example for a Type I state-advance function is $\stateadvancefunc(\statevar_{\tind-1},\protTrans_{\tind})=\protTrans_{\tind}$,
for a Type II state-advance function is $\stateadvancefunc(\statevar_{\tind-1},\protTrans_{\tind})=\statevar_{\tind-1}\wedge\protTrans_{\tind}$, and for a Type III state-advance function is:
$\stateadvancefunc(\statevar_{\tind-1},\protTrans_{\tind})=1\oplus(\statevar_{\tind-1}\wedge\protTrans_{\tind})$. We now return to the definition of the composite-functions $\stateiterfunc_\tind(\statevar_{\tind-1})$ in \eqref{eq:nufunc}, and note that $\stateiterfunc_\tind(\statevar_{\tind-1})$ is constant (i.e. set to either $0$ or $1$) if the transmission functions are such that $\statevar_{\tind}$ receives the same value for both $\statevar_{\tind-1}=0$ and $\statevar_{\tind-1}=1$. As the transmission function $\finitestatefunc_{\tind}(\statevar_{\tind-1})$ determines the values associated with the edges of the state diagram, it can be seen that for every type of advance function, there exist only two transmission functions which render $\stateiterfunc_\tind(\statevar_{\tind})$ constant. Since there are in total four possible transmission function, there are therefore only two possible transmission functions before the appearance of one of the two that makes 
$\stateiterfunc_\tind(\statevar_{\tind})$ constant, as required by the scheme.

\begin{figure}[ht]
\centering
\includegraphics{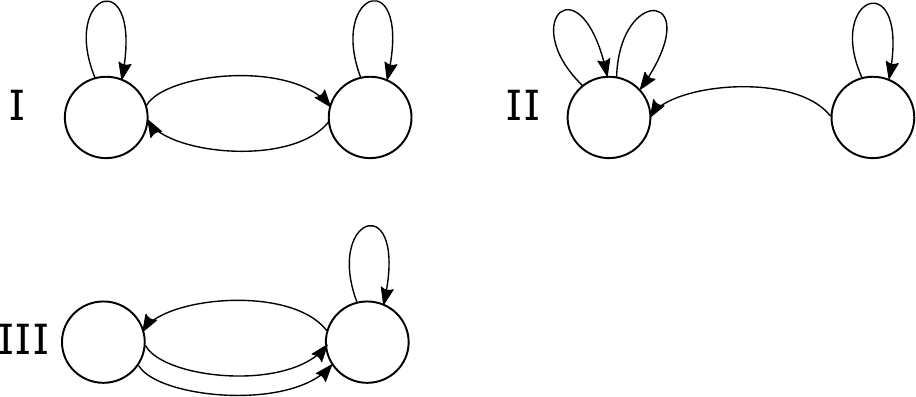}
\caption{State diagrams of the three types of state-advance functions.
\label{fig:stateadvance}}
\end{figure}


\section{Failure of the Coding Scheme for Three States\label{section:threestates}}
\begin{figure}
	\centering
	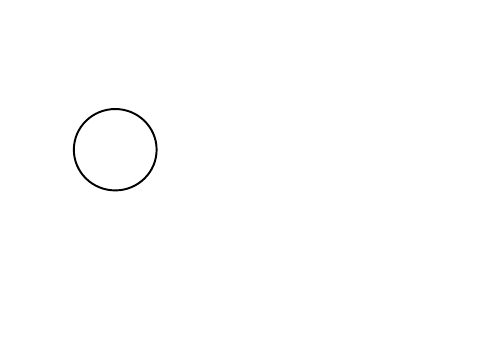
	\caption{A diagram representing the state-advance function $\stateadvancefunc(\statevar_{\tind-1},\protTrans_{\tind})$ of the three-state protocol from Section~\ref{section:threestates}. The nodes represent the states, and an edge $(\statevar_{\tind-1}, \statevar_{\tind})$ represents a transition from state $\statevar_{\tind-1}$ to $\statevar_{\tind}$. The numbers attached to the edges are the transcript bit $\protTrans_\tind={\finitestatefunc}_{\tind}(\statevar_{\tind-1})$. \label{fig:threestate}} 
\end{figure}
We now provide an example of a protocol for which both methods described in Subsections~\ref{subsec:effstate} and \ref{subsec:effexhaust} fail. Since the protocol is to be used on a block (rather than on the entire protocol), we use $\blockLength$ to denote its length.
\begin{example}\label{example:disjprotocol}
We define the following interactive three-state protocol ($\statespace=\{0,1,2\}$) of length $\blockLength$. The state-advance rule $\statevar_{\tind} = \stateadvancefunc(\statevar_{\tind-1},\protTrans_{\tind})$ is depicted in Figure~\ref{fig:threestate}. Namely, at state $\statevar_{\tind-1}=0$, the next state is $\statevar_{\tind}=\protTrans_{\tind}$. At state $\statevar_{\tind-1}=1$ the next state is $\statevar_{\tind}=0$ if $\protTrans_{\tind}=0$ and $\statevar_{\tind}=2$ if $\protTrans_{\tind}=1$. At state $\statevar_{\tind-1}=2$, the next state is $\statevar_{\tind}=2$ regardless the value of $\protTrans_{\tind}$.

Let $\bs{\alpha}=(\alpha_1,\alpha_2,...\alpha_\blockLength)$ and $\bs{\beta}=(\beta_1,\beta_2,...\beta_\blockLength)$ be binary sequences. Assume Alice knows the elements of both sequences only at odd indices, and Bob knows the elements of both sequences only at even indices. The following transmission function is used by Alice at odd time indices:
\begin{align}
\protTrans_{\tind} = {\finitestatefunc}_{\tind}(\statevar_{\tind-1})
=\begin{cases}
\alpha_\tind & \text{if } \statevar_{\tind-1}\in\{0,1\}\\
\beta_\tind & \text{if } \statevar_{\tind-1}=2,\\
\end{cases}
\end{align}
and the following transmission function is used by Bob at even time indices:
\begin{align}
\protTrans_{\tind} = {\finitestatefunc}_{\tind}(\statevar_{\tind-1})
=\begin{cases}
\alpha_\tind\wedge\statevar_{\tind-1}  & \text{if } \statevar_{\tind-1}\in\{0,1\}\\
\beta_\tind & \text{if } \statevar_{\tind-1}=2.\\
\end{cases}
\end{align}
\end{example}
We start by proving the failure of the efficient state lookahead scheme from Subsection \ref{subsec:effstate}, by showing a reduction from the disjointness problem commonly used in the communication complexity literature \cite{communicationComplexityBook}.  
\begin{definition}[Disjointness]
Alice and Bob are given as input the sets $X,Y\subseteq \{1,...,\blockLength/2\}$, respectively. The disjointness function is defined as
\begin{align}
    \mathrm{DISJ}(X,Y)\dfn \indfunc{X\cap Y=\emptyset},
\end{align}
where $\indfunc{\cdot}$ is the indicator function, which equals one if the condition is satisfied and zero otherwise.
\end{definition}
We now show how $\mathrm{DISJ}(X,Y)$ can be computed using the three-state protocol of Example~\ref{example:disjprotocol}. We set the values of the vector $\bs{\alpha}$ for $k\in\{1,2,...,\blockLength/2\}$ according to
\begin{align}
    \alpha_{2k-1} = \indfunc{k\in X},\quad \alpha_{2k} = \indfunc{k\in Y}.
\end{align}
The values of the elements of $\bs{\beta}$ do not affect the reduction from disjointness and can be all set to zero for simplicity. They will be used in the proof of the failure of the exhaustive simulation scheme shown in the sequel.

Observe that $s_\blockLength=2$ if and only if there exist at least one  $k\in\{1,2,...,\blockLength/2\}$ for which
$\alpha_{2k-1}=1$ and $\alpha_{2k}=1$, which means that 
${k\in X}$ and ${k\in Y}$ and the intersection of $X$ and $Y$ is not empty. Namely,
\begin{align}\label{eq:disjointnessreduction}
    \mathrm{DISJ}(X,Y)=\indfunc{s_\blockLength\in\{0,1\}},
\end{align}
which means that $\statevar_\blockLength$ can be used to compute $\mathrm{DISJ}(X,Y)$.

Since it is assumed that $\statevar_\blockLength$ can be computed using $o(\blockLength)$ bits, and $\mathrm{DISJ}(X,Y)$ can be computed using $s_\blockLength$ without additional communication due to \eqref{eq:disjointnessreduction}, it follows that $\mathrm{DISJ}(X,Y)$ can also be computed using $o(\blockLength)$ bits. However, it is well-known that the communication complexity of the disjointness function is $\Omega(\blockLength)$:
even if Alice and Bob can use a shared randomness source in their communication protocol, and even if they are allowed to err with probability $1/3$, they must still exchange $\Omega(\blockLength)$ bits in the worst-case in order to compute $\mathrm{DISJ}(X,Y)$ \cite{kalyanasundaram1992probabilistic,razborov1990distributional}.
In fact, disjointness remains hard even in the \emph{amortized} case, where Alice and Bob are given a sequence of inputs $X_1,...,X_l \subseteq \{1,...,\blockLength/2\}$ and $Y_1,...,Y_l \subseteq \{1,...,\blockLength/2\}$ (respectively),
and their goal is to output the sequence $\mathrm{DISJ}(X_1,Y_1),...,\mathrm{DISJ}(X_l,Y_l)$. The \emph{average} communication per-copy for this task is still $\Omega(m)$ (i.e., the \emph{total} communication is $\Omega(m \cdot l$, where $l$ is the number of copies). This result is the direct consequence of the following three results: i) the information cost of disjointness is linear \cite{bar2004information}; ii) information cost is additive \cite{braverman2014information}; and iii) information is a lower bound on communication \cite{bar2004information}.

We now prove that for Example~\ref{example:disjprotocol}, the efficient exhaustive simulation of Subsection~\ref{subsec:effexhaust} also fails, by providing a setting of $\bs{\alpha}$ and $\bs{\beta}$ for which simulating the transcripts of all three possible initial states requires the parties to reliably exchange $\frac{3}{2}\blockLength$ bits. This is impossible to accomplish using only $\blockLength+o(\blockLength)$ exchanged bits, as assumed in the scheme, and therefore the scheme must fail.
We set up the example as follows: we set $\bs{\beta}$ to be an arbitrary binary vector whose odd elements are known only to Alice and whose even elements are known only to Bob. In addition, we set the odd elements of $\bs{\alpha}$ to be arbitrary and known only to Alice, and set all the even elements of $\bs{\alpha}$ to zero.
We observe that the transcript associated with the initial state $\statevar_0=2$ essentially sends the sequence $\beta_1,...,\beta_\blockLength$ non-interactively, implying that the parties exchange $\blockLength$ bits that are initially unknown to their counterparts. For the other two initial states, $\statevar_0 \in \{0,1\}$, 
the setting of the even entries of $\alpha$ to zero guarantees that the transcripts associated with the two initial states $\statevar_0 \in \{0,1\}$ will never reach the state $\statevar_i= 2$ at $i\leq\blockLength$.
This way, in order to simulate the associated transcripts, Alice must  convey to Bob the even entries of $\bs{\alpha}$. Hence, 
successful exhaustive simulation means that $\frac{3}{2}\blockLength$ bits must be exchanged overall, which cannot be done using only $m+o(m)$ bits.


\section{Achieving Shannon Capacity with More Than Two States
\label{section:manystates}}
In the previous sections we presented a coding scheme that achieves capacity for {all} two-state protocols, but fails to achieve capacity for at least one three-state protocol. In this section we present specific families of $\StateOrder$-state protocols which obey two conditions, and show that within these families, almost all protocols can be simulated at Shannon capacity. 
The notion of achieving capacity for almost all members of a family is demonstrated in the following example:
\begin{example}
Consider the family of Markovian protocols with $\StateOrder$ states defined in Example~\ref{example:markovian} where $\StateOrder$ is a power of two, and whose transmission functions are taken from the entire set of $\statespace\mapsto\{0,1\}$ functions. We shall now show that capacity is achievable for almost all protocols in this family.

To see this, first observe that the set of possible transmission functions contains two \textit{constant} functions: one that maps all states to $0$ and one function that maps all states to $1$. Now, assume that vertical simulation is implemented as described in Section~\ref{section:codingscheme}, but all transcripts for initial states in all blocks are simulated for the last $\protLength^{1/4}$ times in every block (which requires only $o(\protLength)$) channel uses. It is easy to show (and a stronger statement is proved in Theorem~\ref{theorem:manystatesshannon} below) that almost all protocols in the family have at least one sequence of $\log\StateOrder$ constant functions within the last $\protLength^{1/4}$ times in every transmission block. Having this sequence of constant functions will ensure that all transcripts in every block will have the same final state, which could be used for the efficient state lookahead method described in Subsection~\ref{subsec:effstate}. 
\end{example}

On the other hand, one might argue that the presence of a sequence of constant transmission functions reduces the interactiveness of the protocol. In other words, that highly interactive protocols are not likely to include a constant function.
However, it was previously shown in \cite{MarkovianISIT}, that the scheme described above can be used for protocols whose transmission functions are taken from a smaller families of non-constant functions, such as the family of balanced Boolean functions. We shall now extend the results from 
\cite{MarkovianISIT} to finite-state protocols. For this purpose, we define two conditions that the family of protocols should fulfill. The first condition is related to the state-advance function, and the second  condition is related to the transmission functions, as defined here:
\begin{definition}\label{def:coincidingadvance}
A state-advance function $\stateadvancefunc$ of an
 $\StateOrder$-state protocol $\bs{\FiniteStateProtocol}_{\StateOrder}$ is called ``coinciding" if there exist $\coincidingConstant\in \mathbb{N}$ such that for every pair of distinct states $j,j'\in \statespace$, $j\neq j'$
 there exists a pair of binary sequences of length $\coincidingConstant$: $(b^j_1,b^j_2,...,b^j_{\coincidingConstant})$ and $(b^{j'}_1,b^{j'}_2,...,b^{j'}_{\coincidingConstant})$  for which 
  $\tilde{\statevar}^j_{\coincidingConstant}=\tilde{\statevar}^{j'}_{\coincidingConstant}$ where $\tilde{\statevar}^j_{\coincidingConstant}$ is generated by applying
 \begin{align}
 \tilde{\statevar}^j_{\tind} = \stateadvancefunc(\tilde{\statevar}^j_{\tind-1}, b^j_{\tind}),
 \end{align}
 for $\tind$ going from $1$ to $\coincidingConstant$ with the initial condition $\tilde{\statevar}^j_{0}=j$ and $\tilde{\statevar}^{j'}_{\coincidingConstant}$ is generated similarly, replacing $j$ by $j'$.
\end{definition}

\begin{definition}\label{def:usefulfuncs}
A set $\usefulset$ of $\StateOrder$-state transmission functions $\statespace\mapsto \{0,1\}$  is called ``useful" if for every pair of distinct states $\statevar,\statevar'\in\statespace$, $\statevar\neq\statevar'$  there exists at least one set of four functions  $\{f^{00},f^{01},f^{10},f^{11}\} \subseteq \usefulset$ for which
\begin{align}
f^{tt'}(\statevar)=t\text{ and }
f^{tt'}(\statevar')=t'
\end{align}
for all pairs $(t,t')\in\{0,1\}^2$.
\end{definition}
The following theorem formalizes the notion of achieving capacity for almost all members of these families of protocols:
\begin{theorem}\label{theorem:manystatesshannon}
Let $\bs{\Pi}$ be the family of all $M$-state protocols whose state-advance function is coinciding and whose transmission functions are taken from a fixed given set of useful functions. Then there exists a sequence of families of protocols $\bs{S}=\{S_1,S_2,...\}$,  $S_\protLength\subseteq\Pi_\protLength$ and $|S_\protLength|/|\Pi_\protLength|  = 1-o(1)$, for which the interactive capacity is equal to the Shannon capacity. Namely,
\begin{align}
    \Cinter(\bs{S},P_{Y|X})=\Cshannon(P_{Y|X}).
\end{align}
\end{theorem}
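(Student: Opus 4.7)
The plan is to combine the vertical simulation framework with blocks of length $\blockLength=\sqrt{\protLength}$ and an auxiliary exhaustive-simulation pass over a short suffix of length $T=\protLength^{1/4}$ inside every block, as sketched in the example preceding the theorem. The structural property I would rely on is the existence, within the length-$T$ suffix of every block, of a \emph{synchronizing window}: a consecutive run of $L$ transmission functions whose cumulative effect drives every one of the $\StateOrder$ possible initial states at the start of the window to a common state. When such a window exists, the final state of the block is independent of its assumed initial state and can be recovered by exhaustively simulating only the $T$ suffix positions; the exhaustive simulation costs $O(T)$ cleanly-exchanged bits per block (describing the transmission functions in those positions, each drawn from the fixed finite set $\usefulset$), hence $o(\protLength)$ channel uses in total, realizing the efficient state lookahead of Subsection~\ref{subsec:effstate}.

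The central technical step is the construction of a synchronizing window of constant length $L=O(\StateOrder\coincidingConstant)$ from the coinciding and useful assumptions. I would build it greedily, maintaining the set $\mathcal{T}\subseteq\statespace$ of states currently reachable from the $\StateOrder$ initial states. At each phase, I pick any pair $j\neq j'\in\mathcal{T}$; Definition~\ref{def:coincidingadvance} supplies length-$\coincidingConstant$ bit sequences $(b^j_i)$ and $(b^{j'}_i)$ that drive $j$ and $j'$ to a common state, and at each of those $\coincidingConstant$ steps Definition~\ref{def:usefulfuncs} supplies a transmission function whose outputs at the current descendants of $j$ and $j'$ are exactly $b^j_i$ and $b^{j'}_i$ (the outputs at the other descendants in $\mathcal{T}$ are uncontrolled, which is harmless). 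Since $j$ and $j'$ end the phase merged, $|\mathcal{T}|$ strictly decreases, so after at most $\StateOrder-1$ phases $\mathcal{T}$ is a singleton. The sequence of transmission functions obtained this way is a fixed length-$L$ \emph{synchronizing pattern}.

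Next, I would endow $\Pi_\protLength$ with the uniform distribution obtained by drawing every transmission-function slot i.i.d.\ from $\usefulset$, and lower-bound the probability that a given length-$L$ window matches the synchronizing pattern above by $p=|\usefulset|^{-L}>0$. The $\lfloor T/L\rfloor=\Omega(\protLength^{1/4})$ disjoint length-$L$ windows inside a single block's suffix involve disjoint slots and are therefore independent, so the probability that none of them is synchronizing is at most $(1-p)^{\Omega(\protLength^{1/4})}$. Taking a union bound over the $\sqrt{\protLength}$ blocks gives a total failure probability of $\sqrt{\protLength}\cdot(1-p)^{\Omega(\protLength^{1/4})}=o(1)$, so if I let $S_\protLength$ be the set of protocols in $\Pi_\protLength$ whose every block's suffix contains a synchronizing window, then $|S_\protLength|/|\Pi_\protLength|=1-o(1)$.

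For protocols in $S_\protLength$, the coding scheme runs ordinary vertical block coding to convey the per-block transcript bits --- which by Lemma~\ref{lemma:blockwise} transmits reliably at any rate below $\Cshannon(P_{Y|X})$ --- while in parallel exhaustively simulating the $\StateOrder$ initial-state trajectories over the $T$ suffix positions of every block in order to compute the block-end state, which is then passed as the initial state of the next block. The suffix pass consumes $o(\protLength)$ channel uses, so the overall rate tends to $\Cshannon(P_{Y|X})$ and matches the upper bound. The step I expect to be the most delicate is verifying the greedy construction despite the uncontrolled drift of the $\StateOrder-2$ states not being actively merged in each phase; the key observation is that the coinciding and useful definitions apply to any \emph{current} pair of distinct states rather than only the original ones, so each phase can be executed on whatever descendants $j$ and $j'$ have at its beginning, and the only fact used to bound $L$ is the monotone decrease of $|\mathcal{T}|$.
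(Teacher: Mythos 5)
Your proposal is correct, and its scheme-level skeleton is the same as the paper's: vertical simulation with blocks of length $\sqrt{\protLength}$, exhaustive simulation of all $\StateOrder$ candidate trajectories over a window of length $\protLength^{1/4}$ in every block at a cost of $o(\protLength)$ channel uses, and a set $S_\protLength$ defined through the event that all trajectories of a block merge, so that the common final state realizes the state lookahead of Subsection~\ref{subsec:effstate}. Where you genuinely diverge is in the probabilistic core. The paper never constructs a synchronizing word; it argues pairwise: for each pair of initial states $(0,j)$, a single window of length $\coincidingConstant$ merges that pair with probability at least $|\usefulset|^{-\coincidingConstant}$ (combining Definitions~\ref{def:coincidingadvance} and~\ref{def:usefulfuncs} exactly as you do, together with the observation that merging is absorbing), the $\smallwindow/\coincidingConstant$ consecutive windows are treated as independent trials, and a union bound is taken over the $\StateOrder-1$ pairs and then over the blocks. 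You instead prove an additional combinatorial lemma: by greedily merging one pair per phase and noting that the image of the current state set can only shrink, you obtain one fixed sequence of at most $(\StateOrder-1)\coincidingConstant$ functions from $\usefulset$ that collapses the entire state space, and you then look for this fixed pattern in disjoint windows. Your lemma is valid --- the crucial points, which you identify, are that both definitions quantify over arbitrary pairs of distinct current states, and that the uncontrolled drift of the remaining states is harmless by the image-cardinality argument; moreover, within a phase, once the tracked pair merges it stays merged under any subsequent functions, so the phase ends merged even though usefulness can no longer be invoked for a non-distinct pair. As for what each route buys: your trials are matches of a fixed pattern on disjoint i.i.d.\ slots, so their independence is immediate, whereas the paper's ``repeated, statistically identical and independent experiment'' step needs the mild caveat that the pair of states at the start of each window varies, and only the uniform lower bound $|\usefulset|^{-\coincidingConstant}$ is common to all trials; in exchange, your per-window success probability $|\usefulset|^{-(\StateOrder-1)\coincidingConstant}$ is exponentially worse in $\StateOrder$, and your $S_\protLength$ (a subset of the paper's coincidence set) is slightly smaller. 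Both differences are immaterial here, since $\StateOrder$, $\coincidingConstant$, and $|\usefulset|$ are constants while $\protLength^{1/4}\to\infty$, so both arguments give $|S_\protLength|/|\Pi_\protLength|=1-o(1)$ and the same capacity conclusion.
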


\begin{proof}
The proof is based on implementing the methods from Subsections~\ref{subsec:effstate} or Subsection~\ref{subsec:effexhaust} using one of the following two constructions. We start by presenting the construction for the efficient state lookahead method from 
Subsection~\ref{subsec:effstate}: 
For the last $\smallwindow=\protLength^{1/4}$ times in every transmission block,
exhaustively simulate the transcripts related to all possible $\StateOrder$ initial states. 
We assume for simplicity that the protocol is extended by zeros so that $\protLength^{1/4}$ is an integer and in addition, so that $\protLength^{1/4}/\coincidingConstant$ is also an integer, as shall be required in the sequel.
This simulation can be implemented by each side describing all its respective $\smallwindow/2$ transmission functions to its counterpart. After this is done, both parties can simulate the transcripts for all possible initial states in the last $\smallwindow$ times in every block without any additional channel uses. 
Since there are only $2^{\StateOrder}$ functions $\statespace\mapsto\{0,1\}$, the description of every function in $\usefulset$ requires at most ${\StateOrder}$ bits. The bits required for the description of all transmission functions of a party, for the last $\smallwindow$ times in all ${\protLength}/{\blockLength}$ transmission blocks, can be reliably conveyed over the noisy channel, either by a single block code per party, or by a distinct block code per time instance. It is easy to see that the setting of  $\smallwindow=\protLength^{1/4}$ and $\blockLength=\protLength^{1/2}$ ensures the transmission of these bits with a vanishing error using the channels only $o(\protLength)$ times. 
Now, if in every block, the transcripts respective to all possible initial state have the same (possibly block dependent) final state, we can use this set of states as the \textit{state lookahead}. We call this phenomenon \textit{state-coincidence} and note that if it occurs, since the channel was used only $o(\protLength)$ times for the calculation of the state lookahead, Shannon capacity can be achieved, as explained in Subsection~\ref{subsec:effstate}.

Alternatively, the efficient exhaustive simulation described in Subsection~\ref{subsec:effexhaust} can be similarly implemented by using the construction for the first (rather than the 
last) $\smallwindow$ times in all blocks. If all states coincide in all blocks, then for every block there is only a single transcript to simulate for the last $\blockLength-\smallwindow$ times in the block. All in all, only $\blockLength+o(\blockLength)$ bits are required for the simulation the transcripts of all the initial states, as required by the method.

We now use $S_\protLength\subseteq\Pi_\protLength$ to denote the subset of protocols for which the states coincide, so their respective transcripts can be simulated at Shannon capacity as explained above. It remains to prove that $|S_\protLength|/|\Pi_\protLength|  = 1-o(1)$. This is done by assuming that the protocols in $S_\protLength$ are generated by drawing all their transmission function uniformly from the set $\usefulset$ and independently in time, and denoting the probability of drawing a protocol in $S_\protLength$ by $\Pr(S_\protLength)$, so
\begin{align}
\Pr(S_\protLength)=\frac{|S_\protLength|}{|\usefulset|^{\protLength}}= 
\frac{|S_\protLength|}{|\Pi_\protLength|}.
\end{align}
Hence, proving that $\Pr(S_\protLength)=1-o(1)$ will prove the statement in the theorem.

We now show that indeed, the assumptions in the theorem ensure that $\Pr(S_\protLength)=1-o(1)$. We start by analyzing the probability of state-coincidence respective to a specific small block of length $\smallwindow$. For convenience, we assume that its time indices are $1$ to $\smallwindow$.
We start by denoting the transcript related to the initial state $j\in \statespace$ in by $(\protTrans^j_1,\protTrans^j_2,...,\protTrans^j_\smallwindow)$ and the respective sequence of states by $(\statevar^j_1,\statevar^j_2,...,\statevar^j_\smallwindow)$. More explicitly, 
$(\protTrans^j_1,\protTrans^j_2,...,\protTrans^j_\smallwindow)$ and $(\statevar^j_1,\statevar^j_2,...,\statevar^j_\smallwindow)$ are generated by the following iteration of \eqref{eq:finitenextbit}, \eqref{eq:finitenextstate}:
\begin{align}
\protTrans^j_{\tind} &= {\finitestatefunc}_{\tind}(\statevar^j_{\tind-1}),\label{eq:transupdate}\\
\statevar^j_{\tind} &= \stateadvancefunc(\statevar_{\tind-1},\protTrans^j_{\tind}),\label{eq:statesupdate}
\end{align}
for $\tind$ going from ${\tind}=1$ to ${\tind}=\smallwindow$ with the initial condition $\statevar^j_{0}=j$. We similarly define the transcript and the sequence of state respective to the initial state $j'\neq j$, $j'\in \statespace$ by $(\protTrans^{j'}_1,\protTrans^{j'}_2,...,\protTrans^{j'}_\smallwindow)$ and $(\statevar^{j'}_1,\statevar^{j'}_2,...,\statevar^{j'}_\smallwindow)$. We shall now bound the probability of state-coincidence: $\Pr(\statevar^{j}_\smallwindow=\statevar^{j'}_\smallwindow)$. 

Now, by the assumption that the state-advance function is coinciding (Definition~\ref{def:coincidingadvance}), there exist two binary sequences of $(b^j_1,b^j_2,...,b^j_{\coincidingConstant})$ and $(b^{j'}_1,b^{j'}_2,...,b^{j'}_{\coincidingConstant})$ for which 
$\tilde{\statevar}^j_{\coincidingConstant}=\tilde{\statevar}^{j'}_{\coincidingConstant}$ (the tilde in the notation of 
$\tilde{\statevar}^j_{\tind}$ and $\tilde{\statevar}^{j'}_{\tind}$ is used to distinguish them from ${\statevar}^j_{\tind}$ and ${\statevar}^{j'}_{\tind}$). The distinction is required, since $\tilde{\statevar}^j_{\tind}$ and $\tilde{\statevar}^{j'}_{\tind}$ are created by specific binary vectors, which in the general might not correspond to transcripts of protocols in $\Pi_n$.
We shall now use the coincidence and usefulness properties of the family of protocols in order to bound the probability of drawing a sequence of transmission functions, $\finitestatefunc_1,\finitestatefunc_2,...,\finitestatefunc_\coincidingConstant$, for which the iteration in \eqref{eq:transupdate}, \eqref{eq:statesupdate} yields
${\statevar}^j_{\coincidingConstant}={\statevar}^{j'}_{\coincidingConstant}$. We denote by $k$ the smallest time index for which
$\tilde{\statevar}^j_{k}=\tilde{\statevar}^{j'}_{k}$. It follows that for every $1\leq\tind\leq k$ we have $\tilde{\statevar}^j_{k}\neq\tilde{\statevar}^{j'}_{k}$. We now show that by the usefulness assumption, the binary sequences
$(b^j_1,b^j_2,...,b^j_{k})$, $(b^{j'}_1,b^{j'}_2,...,b^{j'}_{k})$ which generated the state sequences $(\tilde{\statevar}^j_{1},\tilde{\statevar}^j_{2},...,\tilde{\statevar}^j_{k})$, $(\tilde{\statevar}^{j'}_{1},\tilde{\statevar}^{j'}_{2},...,\tilde{\statevar}^{j'}_{k})$, can also be produced by a sequence of transmission functions $(\finitestatefunc,\finitestatefunc_2,...,\finitestatefunc_k)$ drawn uniformly and independently from $\usefulset$. The proof follows by observing that by the usefulness assumption, for every pair $(b^j_\tind,b^{j'}_\tind)$ for $1\leq\tind\leq k$, there exists at least one function $\finitestatefunc_\tind\in\usefulset$ such that
\begin{align}
    {\finitestatefunc}_{\tind}(\statevar)&=b^j_\tind,\label{eq:useful1}
    \text{ and}
    \\
    {\finitestatefunc}_{\tind}(\statevar')&=b^{j'}_\tind\label{eq:useful2}
\end{align}
for every $\statevar,\statevar'\in \statespace$, $\statevar\neq\statevar'$. In particular \eqref{eq:useful1} and \eqref{eq:useful2} also hold for the states in the sequences $(\tilde{\statevar}^j_{1},\tilde{\statevar}^j_{2},...,\tilde{\statevar}^j_{k})$, $(\tilde{\statevar}^{j'}_{1},\tilde{\statevar}^{j'}_{2},...,\tilde{\statevar}^{j'}_{k})$. Therefore, there exists a sequence of transmission functions $(\finitestatefunc,\finitestatefunc_2,...,\finitestatefunc_k)$, with $\statevar^j_k=\statevar^{j'}_k$ drawn with probability
\begin{align}
    \Pr\left(\statevar^j_{k}=\statevar^{j'}_{k}\right)\geq |\usefulset|^{-k}.
\end{align}
We now note that due to \eqref{eq:transupdate} and  \eqref{eq:statesupdate}, for every $k<\tind\leq\coincidingConstant$, we have that $\statevar^j_{\tind}=\statevar^{j'}_{\tind}$ for every choice of transmission functions $(\finitestatefunc_{k+1},...,\finitestatefunc_\coincidingConstant)$ and particularly for $\statevar^j_{\coincidingConstant}=\statevar^{j'}_{\coincidingConstant}$. Therefore,
\begin{align}
    \Pr\left(\statevar^j_{\coincidingConstant}=\statevar^{j'}_{\coincidingConstant}\right)\geq |\usefulset|^{-k}
    \geq |\usefulset|^{-\coincidingConstant}.\label{eq:success1}
\end{align}
We now observe, that  \eqref{eq:success1} only assumed that the initial states are distinct, i.e. $\statevar^j=j\neq\statevar^{j'}={j'}$. Therefore, in case $\statevar^j_{\coincidingConstant}\neq\statevar^{j'}_{\coincidingConstant}$
    we can consider the drawing of the following $\coincidingConstant$ functions as a repeated, statistically identical and independent experiment. Following this argumentation, we can consider consecutive $\smallwindow/\coincidingConstant$ such experiments, and observe that the a failure in the coincidence at the end of the block of length $\smallwindow$, $\statevar^j_{\smallwindow}\neq\statevar^{j'}_
    {\smallwindow}$, implies that all these $\smallwindow/\coincidingConstant$ experiments failed. We can therefore state the following bound:
    \begin{align}
    \Pr\left(\statevar^j_{\smallwindow}\neq\statevar^{j'}_
    {\smallwindow}
    \right)
    &\leq
    \left[\Pr\left(\statevar^j_{\coincidingConstant}\neq\statevar^{j'}_{\coincidingConstant}\right)\right]^{{\smallwindow}/{\coincidingConstant}}\label{eq:pfailure1}    
    \\
    &\leq (1-|\usefulset|^{-\coincidingConstant})^{{\smallwindow}/{\coincidingConstant}} \label{eq:pfailure2}\\
    &\leq \exp\left[
    -
    |\usefulset|^{-\coincidingConstant}
    {{\smallwindow}/{\coincidingConstant}}
    \right],\label{eq:pfailure3}
    \end{align}
    where \eqref{eq:pfailure1} is potentially loose since it considers only the coincidence events occurring in non-overlapping blocks of length $\coincidingConstant$,  \eqref{eq:pfailure2} is due to \eqref{eq:success1}, and finally  is by the inequality $(1-x)^a\leq \exp(-ax)$ which holds for any $x>0$ and $a\in\mathbb{N}$.

We emphasize that so far we examined the coincidence of only two initial states, $j$ and $j'$ in a single transmission block. We denote by $\mathcal{E}_1$ the event in which all the transcripts corresponding to all initial states did not coincide to the same final state. The probability of $\mathcal{E}_1$ can be bounded by:
\begin{align}
    \Pr(\mathcal{E}_1)
    &=
    \Pr\left(
    \bigcup_{j=1}^{\StateOrder-1}
    \statevar^0_{\smallwindow}\neq\statevar^{j}_
    {\smallwindow}
    \right)\\
    &\leq (\StateOrder-1)\exp\left[
    -
    |\usefulset|^{-\coincidingConstant}
    {{\smallwindow}/{\coincidingConstant}}
    \right]\label{eq:allstatescoincide}\\
    &<\StateOrder\exp\left[
    -
    |\usefulset|^{-\coincidingConstant}
    {{\smallwindow}/{\coincidingConstant}}
    \right]
\end{align}
where in \eqref{eq:allstatescoincide} we used the union bound and \eqref{eq:pfailure3}. Finally, we denote by $\mathcal{E}_2$ the event that the final states did not coincide in \textit{all} transmission blocks. Using the union bound again, this probability can be bounded by:
\begin{align}
        \Pr(\mathcal{E}_2)&\leq    \protLength/\blockLength \Pr(\mathcal{E}_1)\\
        &\leq
        \sqrt{\protLength}
        \StateOrder\exp\left[ -
    |\usefulset|^{-\coincidingConstant}
    {{\protLength^{1/4}}/{\coincidingConstant}}
    \right]\\
    &=o(1).
\end{align}
It now immediately follows that:
\begin{align}
    \frac{|S_\protLength|}{|\Pi_\protLength|}&=\Pr(S_n)\\
    &=1-\Pr(\mathcal{E}_2)\\
    &=1-o(1)
\end{align}
which concludes the proof. 
\end{proof}


\section{Concluding Remarks\label{section:conclusion}}
In this paper, the problem of simulating an interactive protocol over a pair of binary-input noisy channels is considered. While previous works \cite{schulman1992communication,schulman1996coding,kol2013interactive,InteractiveLowerBound} approach this problem using worst-case assumptions (characterizing the rates in which \textit{all} possible interactive protocols can be simulated), this work restricts the discussion to a specific set of finite-state protocols. A coding scheme is presented that achieves Shannon capacity for all two-state protocol, but 
can not be used to simulate at Shannon capacity for at least one three-state protocol. Then, specific families of finite-state protocols are considered, and Shannon capacity is proved to be achievable for almost all of their members.

Since the proofs in this paper are based on specific coding schemes, proving their failure does not prove the inachievability of Shannon capacity. 
It is also plausible that Shannon capacity is achievable for larger classes of nontrivial interactive protocols using different coding scheme.
A nontrivial upper bound on the ratio between the Shannon capacity and the interactive capacity for a fixed channel (i.e., not in the limit of a very clean channel) still remains an intriguing open question even in the simplest binary symmetric case. 



\appendices
 \section{Proof of Lemma~\ref{lemma:blockwise} \label{appendix:lemma}}
\begin{proof}
	The proof is by straightforward implementation of Gallager's random coding error exponent and the union bound. Due to  \cite{GallagerIT}[Theorem 5.6.4], the probability of decoding error in a single block is upper bounded by:
	\begin{align}
	\Pr(\text{block error})\leq \exp\left(-\frac{b(n)}{R}E_r(R)\right)
	\end{align}
	where $E_r(R)$ (the error exponent) is strictly positive for any $0\leq R<\Cshannon(P_{Y|X})$ and ${b(n)}/{R}$ is the length of the block code. Now, having $l(n)$ independent such blocks, the probability of error in one or more blocks can be upper bounded using the union bound:
	\begin{align}
	&\Pr(\text{error in any block})\leq l(n) \exp\left(-\frac{b(n)}{R}E_r(R)\right)\\
	&=\exp\left(-b(n)\frac{E_r(R)}{R}+\ln l(n)\right)
	\stackrel{(a)}{=}e^{-\Omega(1)}
	=o(1)
	\end{align}
	where $(a)$ is by the assumption that $l(n) = o(e^{b(n)})$.
\end{proof}

\bibliographystyle{IEEEtran}
\bibliography{bibtex_references}

\end{document}